\newtheorem{theorem}{Theorem}
\newcommand{\ket}[1]{\vert #1 \rangle}
\begin{document}

\begin{center}{\Large \textbf{
Minimal Zeeman field requirement for a topological transition in superconductors
}}\end{center}

\begin{center}
Kim Pöyhönen\textsuperscript{1,2*},
Daniel Varjas\textsuperscript{1,2,3},
Michael Wimmer\textsuperscript{1,2},
Anton R. Akhmerov\textsuperscript{1}
\end{center}

\begin{center}
\textbf{1} QuTech,  Delft  University  of  Technology,  P.O.  Box  4056,  2600  GA  Delft,  The  Netherlands
\\
\textbf{2} Kavli  Institute  of  Nanoscience,  Delft  University  of  Technology,P.O.  Box  4056,  2600  GA  Delft,  The  Netherlands
\\
\textbf{3} Department of Physics, Stockholm University, AlbaNova University Center, 106 91 Stockholm, Sweden
\\
* \href{mailto:kkpoyhon@gmail.com}{kkpoyhon@gmail.com}
\end{center}

\begin{center}
\today
\end{center}


\section*{Abstract}
{\bf
Platforms for creating Majorana quasiparticles rely on superconductivity and breaking of time-reversal symmetry. By studying continuous deformations to known trivial states, we find that the relationship between superconducting pairing and time reversal breaking imposes rigorous bounds on the topology of the system. Applying these bounds to $s$-wave systems with a Zeeman field, we conclude that a topological phase transition requires that the Zeeman energy at least locally exceed the superconducting pairing by the energy gap of the full Hamiltonian. Our results are independent of the geometry and dimensionality of the system.
}


\section{Introduction}
\label{sec:intro}
Topological superconductors combine a gapped superconducting bulk with protected zero energy states, with 1D systems supporting Majorana bound states being a particularly interesting example because of their potential use in quantum computation \cite{kitaev:2001:1}.
Most naturally occurring superconductors are topologically trivial, and they have a fully developed gap with an $s$-wave order parameter.
A part of the research of topological superconductors therefore focuses on engineering hybrid devices that alter conventional superconductors to drive them through a topological phase transition \cite{fu:2008:1,oreg:2010:1, lutchyn:2010:1,cook:2011:1,choy:2011:1, lutchyn:2018:1}.

A recent experiment \cite{vaitiekenas:2020:1} reported zero bias peaks (a possible signature of Majorana bound states) in a device combining a semiconducting nanowire, a superconducting aluminium shell, and an \ce{EuS} ferromagnetic insulator, as can be seen in Fig.~\ref{fig:schematic}(a).
Based on several device characterizations, the authors argue that their data is consistent with \ce{EuS} inducing a large Zeeman splitting inside the aluminium, but having a negligible direct effect on the semiconducting nanowire.
Furthermore, the authors observe that the aluminium shell retains a superconducting gap, and therefore it is likely that the induced Zeeman field is smaller than the superconducting pairing $\Delta$. Combining these observations, the authors conjecture that the system is well modelled by a semiconducting nanowire proximited by a Zeeman-split superconductor, sketched in Fig.~\ref{fig:schematic}(b).

\begin{figure}
	\begin{center}
		\includegraphics[width=\linewidth]{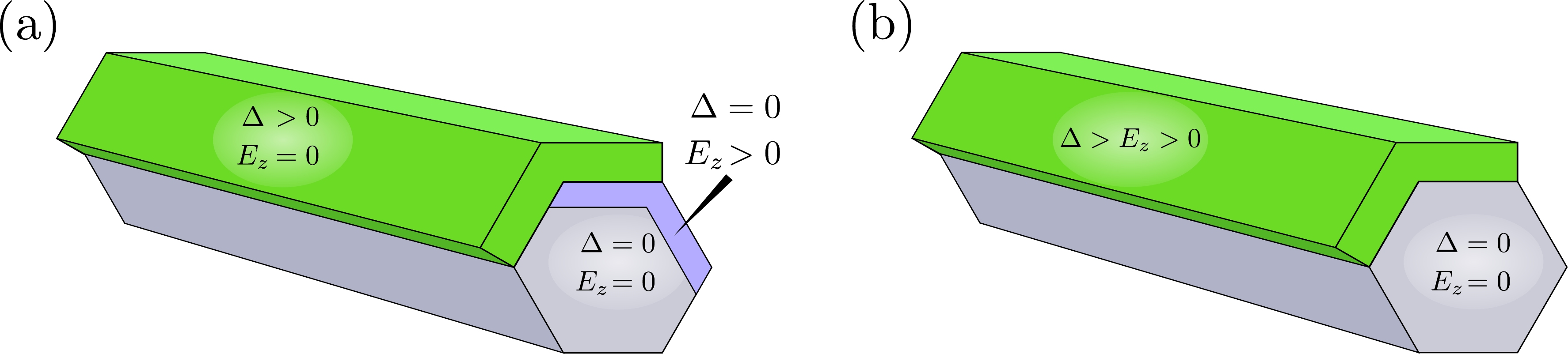}
		\caption{(a) A schematic representation of the experimental setup in Ref.~\cite{vaitiekenas:2020:1}. The system is a heterostructure consisting of three parts: a semiconducting nanowire, an $s$-wave superconductor as well as a magnetic insulator. (b) Effective model of the same, a semiconducting nanowire coupled to an $s$-wave superconductor which features a Zeeman term.}\label{fig:schematic}
	\end{center}
\end{figure}

This device is therefore very different from prior work \cite{mourik:2012:1, das:2012:1, deng:2016:1, nichele:2017:1, zhang:2018:1}, where the Zeeman splitting is induced by an external magnetic field, and is therefore maximal in the semiconductor due to its large gyromagnetic factor.
This experimental observation motivates a study into what constraints apply to the Zeeman field and superconducting pairing in hybrid devices.
Several recent works \cite{woods:2020:1,escribano:2020:1,liu:2020:1} have been unable to repeat the observation in similar models, or provided reasoning for why it is unlikely to occur within reasonable parameter ranges \cite{maiani:2020:1}.

Because the exact material parameters of the experimental device are unknown, we consider the general problem of a time-reversal invariant normal Hamiltonian, combined with a general time-reversal breaking term, and a general superconducting pairing term.
We find that the eigenvalues of the combined time-reversal breaking and superconducting term restrict the topological phase of the system: a topological phase transition requires negative eigenvalues to exceed the magnitude of the system gap.
We then apply this to the case of a position-dependent Zeeman field and local $s$-wave superconductivity.
We find that in absence of other mechanisms of time-reversal symmetry breaking---such as superconducting phase difference or orbital magnetic field---the Zeeman field of a topologically nontrivial hybrid system must at least locally exceed the $s$-wave pairing by the energy gap of the system. This is the case in experimental setups where a magnetic field is applied to a proximitized nanowire, with the Zeeman energy exceeding the proximity-induced pairing; however, if a Zeeman term is instead added to the $s$-wave superconductor, to preserve superconductivity it must generally be smaller than $\Delta$, and such insufficient to effect a topological phase transition.

This article is structured as follows. We first present the general form of the Hamiltonian we are interested in studying. In the following section, we prove a number of conditions on the Hamiltonian which are necessary but not sufficient to achieve nontrivial topologies. We then examine how these apply in the experimentally most relevant case of $s$-wave heterostructures with Zeeman energy terms. Finally, we summarize the results and discuss potential future application.

\section{Model}
The starting point of our investigation is the Bogoliubov-de Gennes Hamiltonian written in the Nambu basis:
\begin{equation}
H_\text{BdG} =
\begin{pmatrix}
H_\text{N} & \Delta\\
\Delta^\dagger & -\mathcal T H_\text{N} \mathcal T^{-1}
\end{pmatrix},\label{eq:standardbdg}
\end{equation}
where $H_\text{N}$ is the electron Hamiltonian, $\Delta$ is the superconducting pairing, and $\mathcal T$ is the time-reversal symmetry operator.
Because the superconducting pairing satisfies the constraint $\mathcal{T}\Delta\mathcal{T}^{-1} = \Delta^\dagger$, $H$ has particle-hole symmetry with the operator $\mathcal P = i\tau_y\mathcal T$ (with $\tau_i$ the Pauli matrices in particle-hole space); for details, see Appendix~\ref{app:symmetries}.
We decompose the Hamiltonian into its Hermitian blocks:
\begin{equation}
H =
\begin{pmatrix}
H_0 + H_M & \Delta' -i\Delta''\\
\Delta' + i\Delta'' & -H_0 + H_M
\end{pmatrix}.\label{eq:hamiltonian}
\end{equation}
Here $H_0$ and $H_M$ are the time-reversal symmetric and time-reversal breaking parts of the electron Hamiltonian, while $\Delta'$ and $\Delta''$ are the Hermitian and anti-Hermitian parts of the superconducting pairing.
The separation of $\Delta$ into $\Delta'$ and $\Delta''$ depends on the gauge choice, and therefore a careful choice of gauge is useful in imposing stricter bounds on the topology of $H$, as detailed in Appendix \ref{app:gauge}.
We allow the matrices $H_0$ $H_M$, $\Delta'$ and $\Delta''$ to contain arbitrary extra degrees of freedom such as orbital, spin, or position. We will only assume that they are Hermitian and periodic in position space,  in order to identify whether the Hamiltonian~\eqref{eq:hamiltonian} may have a gap closing at the energy $E=0$---a necessary condition of a topological transition~\cite{schnyder:2008:1}.
Because in the thermodynamic limit the gap of a large system with periodic boundary conditions equals to the band gap of the dispersion relation, this allows us to draw conclusions about topological transitions in any dimensionality and with arbitrary geometry.

\section{Trivial topology from bounds on the energy gap}

Our intuition derives from the Anderson's theorems \cite{anderson:1959:1}, which established that in $s$-wave superconductors, disorder in $H_0$ does not close the energy gap due to the time reversal symmetry.
We expect that for $s$-wave-like $\Delta'$, the gap must be closed by terms breaking time reversal, and that a system is trivial if the pairing dominates over such terms.
This intuition is confirmed by our random matrix theory simulation, which we share as an interactive notebook at \href{https://mybinder.org/v2/zenodo/10.5281/zenodo.4276247/?filepath=rmt_demonstration.ipynb}{this URL} for the readers' benefit.
Making the natural generalization to the systems with an arbitrary spatial structure of the pairing and the Hamiltonian, we first consider the case where $\Delta'$ is positive definite, and $H_M$ is sufficiently small.

\begin{theorem}
	The energy gap of $H$ is bounded from below by the smallest eigenvalues of each of $\Delta' \pm H_M$. If this bound is greater than 0, $H$ is topologically trivial. \label{theorem:gapbound}
\end{theorem}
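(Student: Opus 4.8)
The plan is to read off the spectral gap of $H$ at zero energy as the smallest modulus of its eigenvalues, and to bound it from below by testing $H$ against the two eigenspaces of $\tau_x$. Writing the Hamiltonian compactly as $H = H_0\tau_z + H_M + \Delta'\tau_x + \Delta''\tau_y$ (with the identity on the internal degrees of freedom understood), I would work with a finite system (the thermodynamic limit being governed by the band gap, as noted above) and introduce the vectors $\psi_\pm = (u,\pm u)/\sqrt2$, which span the $\tau_x=\pm1$ subspaces $V_\pm$, each of half the total dimension. A direct computation yields the clean identities $\langle\psi_+|H|\psi_+\rangle = \langle u|(\Delta'+H_M)|u\rangle$ and $\langle\psi_-|H|\psi_-\rangle = -\langle u|(\Delta'-H_M)|u\rangle$, the point being that the $H_0\tau_z$ and $\Delta''\tau_y$ contributions cancel on these particular states. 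Hence $\langle\psi|H|\psi\rangle \geq a_+ := \lambda_{\min}(\Delta'+H_M)$ for every $\psi\in V_+$, while $\langle\psi|H|\psi\rangle \leq -a_- := -\lambda_{\min}(\Delta'-H_M)$ for every $\psi\in V_-$.

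The central step is a dimension-counting (Courant--Fischer) argument. If $H$ had fewer than $\dim V_+$ eigenvalues $\geq a_+$, then the eigenspace of eigenvalues strictly below $a_+$ would have dimension exceeding the codimension of $V_+$ and so would intersect $V_+$ nontrivially, contradicting the lower bound on $V_+$; therefore $H$ has at least $\dim V_+$ eigenvalues $\geq a_+$. The mirror argument on $V_-$ gives at least $\dim V_-$ eigenvalues $\leq -a_-$. Since these two counts already exhaust the spectrum, no eigenvalue lies in $(-a_-,a_+)$, so the gap is at least $\min(a_+,a_-)$, which is the asserted bound. I expect this counting to be the main subtlety: it converts two one-sided expectation-value bounds on complementary half-dimensional subspaces into a two-sided spectral gap, something the naive estimate $\|H\psi\|\geq m$ cannot deliver because of cross terms between the Nambu blocks.

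For the topological statement the crucial observation is that this bound depends only on $\Delta'$ and $H_M$, never on $H_0$ or $\Delta''$. I would therefore switch the latter off along the path $H(t) = tH_0\tau_z + H_M + \Delta'\tau_x + t\Delta''\tau_y$, $t:1\to0$. Each $H(t)$ is again of the form \eqref{eq:standardbdg} (the block $-tH_0+H_M$ is consistent with $\mathcal{T}H_0\mathcal{T}^{-1}=H_0$, $\mathcal{T}H_M\mathcal{T}^{-1}=-H_M$), so particle-hole symmetry is retained, and the same identities show the gap stays $\geq\min(a_+,a_-)>0$ throughout. At $t=0$ the Hamiltonian $H(0)=\Delta'\tau_x+H_M$ becomes, after the Hadamard rotation $\tau_x\to\tau_z$, block diagonal with a positive-definite block $\Delta'+H_M$ and a negative-definite block $-\Delta'+H_M$, both definite precisely because $a_\pm>0$.

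Finally I would flatten this block-diagonal Hamiltonian, $H(0)\to\operatorname{sgn}H(0)$, which keeps the gap open, preserves particle-hole symmetry since $\operatorname{sgn}$ is odd, and sends the two definite blocks to $+1$ and $-1$ times the internal identity, i.e. to the spatially constant Hamiltonian $\tau_z$ (equivalently $\tau_x$ in the original basis). A momentum-independent Bloch Hamiltonian carries trivial invariants in every dimension and symmetry class, so this chain of gapped, symmetry-preserving deformations establishes that $H$ is topologically trivial whenever the bound is positive.
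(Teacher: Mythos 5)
Your proof is correct, and the core of it takes a genuinely different route from the paper. For the spectral bound, the paper assumes an eigenvalue $E$ with $\lambda_+ > E > -\lambda_-$, applies the Hadamard rotation $U=\tfrac{1}{\sqrt2}(\tau_x+\tau_z)$, and factors the resulting determinant via a Schur complement: both factors are determinants of positive-definite matrices (one of the form $X+Z^\dagger Y Z$), so they cannot vanish, giving a contradiction. Your argument replaces this with a Courant--Fischer dimension count on the $\tau_x=\pm1$ subspaces, on which the $H_0\tau_z$ and $\Delta''\tau_y$ contributions to the expectation value cancel; the two half-dimensional subspaces then force $N$ eigenvalues above $a_+$ and $N$ below $-a_-$, exhausting the spectrum. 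Both arguments prove the same exclusion of eigenvalues from $(-\lambda_-,\lambda_+)$ and both exploit the same structural fact that the bound is insensitive to $H_0$ and $\Delta''$; yours is arguably more elementary (no determinants or invertibility needed) and makes the $N$-positive/$N$-negative splitting of the spectrum explicit, while the paper's Schur-complement form is the one reused almost verbatim under the gauge rotations of Appendix~\ref{app:gauge}. Note that the paper sharpens the bound to $|E|\geq\max(\lambda_+,\lambda_-)$ by invoking particle-hole symmetry once at the end; your $\min(a_+,a_-)$ suffices for the triviality claim but you could make the same improvement for free. The deformation arguments are essentially the same in spirit: you switch off $H_0$ and $\Delta''$ and spectrally flatten the resulting block-diagonal Hamiltonian to $\tau_x$, whereas the paper deforms $H_0\to1$, $\Delta''\to0$, $H_M\to0$ and then $\Delta'\to0$ to reach $\tau_z$; both endpoints are momentum-independent atomic-limit Hamiltonians connected to each other by the gapped, particle-hole-symmetric path $\cos\theta\,\tau_x+\sin\theta\,\tau_z$, so the conclusions agree.
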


\begin{proof}
Let the smallest eigenvalues of $\Delta' \pm H_M$ be $\lambda_\pm$.
Assume that $\lambda_\pm > 0$ and that $H$ has an eigenvalue $E$ with $\lambda_+ > E > -\lambda_-$, and therefore
\begin{equation}
\det
\begin{pmatrix}
H_0 + H_M - E & \Delta'-i\Delta''\\
\Delta'+i\Delta'' & - H_0 + H_M - E
\end{pmatrix}
= 0.
\end{equation}
We apply the Hadamard transformation $U = \tfrac{1}{\sqrt{2}}(\tau_x + \tau_z)$ to get
\begin{equation}
\det
\begin{pmatrix}
\Delta' + H_M - E & H_0+i\Delta''\\
H_0-i\Delta'' & - \Delta' + H_M - E
\end{pmatrix}
= 0.\label{eq:det_hadamard}
\end{equation}
By our assumption, $\Delta' + H_M - E$ is positive definite and hence invertible.
Equation~\eqref{eq:det_hadamard} is therefore equivalent to
\begin{equation}
\det\left[\Delta' + H_M - E\right]\det\left[-\Delta' + H_M - E - (H_0-i\Delta'')(\Delta' + H_M - E)^{-1}(H_0+i\Delta'')\right] = 0.
\end{equation}
Introducing matrices $X = \Delta' - H_M + E$, $Y = (\Delta' + H_M - E)^{-1}$, and $Z = H_0+i\Delta''$ (by assumption $X$ and $Y$ are both positive definite), we rewrite this equation as
\begin{equation}
\det X \det\left[X + Z^\dagger Y Z\right] = 0.
\end{equation}
Since
\begin{equation}
v^\dagger\left[X + Z^\dagger YZ \right]v = v^\dagger X v + (Zv)^\dagger Y (Zv) \geq v^\dagger X v >  0,
\end{equation}
for an arbitrary complex vector $v$, the left hand side of the equation above is a product of two determinants of positive definite matrices, and we arrive to a contradiction.
Therefore $\lambda_+ > E > -\lambda_-$ is impossible, or after applying the particle-hole symmetry $|E| \geq \max(\lambda_+,\lambda_-)$.

To show that $H$ is topologically trivial, we note that due to the form of the lower bound proved above, we can smoothly deform $H_0 \to 1, \Delta'' \to 0$ without closing the gap.
Further, because $\Delta'$ and $\Delta' \pm H_M$ are positive definite, so are $\Delta' \pm t H_M$ with $0 \leq t \leq 1$, and we can likewise deform $H_M \to 0$.
Hence $H$ is topologically equivalent to $\tau_z + \Delta' \tau_x$ with a positive definite $\Delta'$, which can be further deformed to a trivial onsite potential by letting $\Delta'\to 0$ again without closing the gap.
\end{proof}

Theorem~\ref{theorem:gapbound} excludes a positive semi-definite $\Delta'$, and therefore does not directly apply to most experimental setups where parts of the system lack superconducting pairing.
In order to apply our reasoning to such systems, we assume that the resulting hybrid Hamiltonian has a finite energy gap $E_G$ and observe that creating such a gap requires a finite perturbation.
This allows us to formulate the following:

\begin{theorem}
	Both $\Delta' \pm H_M$ must have eigenvalues smaller than or equal to $-E_G$, with $E_G$ the gap of $H$, for $H$ to be topological.\label{theorem:weyl}
\end{theorem}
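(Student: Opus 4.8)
The plan is to deduce Theorem~\ref{theorem:weyl} from Theorem~\ref{theorem:gapbound} by a gap-preserving deformation that makes the earlier theorem applicable. The only obstruction to applying Theorem~\ref{theorem:gapbound} directly is that $\Delta'$ and $\Delta' \pm H_M$ need not be positive definite. I would remove this obstruction by rigidly shifting the pairing, $\Delta' \to \Delta' + c$ (a multiple of the identity), which amounts to adding $c\tau_x$ to $H$. Since $\mathcal{P}(c\tau_x)\mathcal{P}^{-1} = -c\tau_x$, this perturbation respects particle-hole symmetry, so $H(c) = H + c\tau_x$ stays in the same symmetry class throughout. The role of the finite gap is to bound how far the shift may be pushed: because $\lVert c\tau_x\rVert = c$, Weyl's eigenvalue inequality gives $\mathrm{spec}\,H(c) \subseteq \mathrm{spec}\,H + [-c,c]$, so for every $c \in [0, E_G)$ the Hamiltonian $H(c)$ stays gapped and is therefore topologically equivalent to $H(0) = H$.

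Next I would feed $H(c)$ into Theorem~\ref{theorem:gapbound}. The shift sends $\Delta' \pm H_M \to (\Delta' \pm H_M) + c$, so their smallest eigenvalues become $\lambda_\pm + c$. Writing $\Delta' = \tfrac12[(\Delta' + H_M) + (\Delta' - H_M)]$ and using Weyl's inequality once more yields $\lambda_{\min}(\Delta') \ge \tfrac12(\lambda_+ + \lambda_-)$, so as soon as both $\lambda_\pm + c > 0$ the matrix $\Delta' + c$ is positive definite as well, and all hypotheses of Theorem~\ref{theorem:gapbound} hold for $H(c)$. That theorem then declares $H(c)$---and hence $H$---topologically trivial. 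Contrapositively, if $H$ is nontrivial then for no admissible $c$ can both shifted eigenvalues be positive, i.e.\ $\min(\lambda_+, \lambda_-) + c \le 0$ for all $c < E_G$; sending $c \to E_G$ gives $\min(\lambda_+, \lambda_-) \le -E_G$, so at least one of $\Delta' \pm H_M$ has an eigenvalue $\le -E_G$.

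To strengthen this to the claimed statement about \emph{both} operators, I would invoke the time-reversal structure of the decomposition: $\Delta'$ is even and $H_M$ is odd under $\mathcal{T}$, so $\mathcal{T}(\Delta' + H_M)\mathcal{T}^{-1} = \Delta' - H_M$. As $\mathcal{T}$ is antiunitary it preserves spectra, forcing $\lambda_+ = \lambda_-$; the single bound above then applies to each operator separately and yields the theorem.

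The main obstacle I anticipate is the interplay between two competing bounds on $c$: the deformation must simultaneously reach the positivity threshold $c > -\min(\lambda_+,\lambda_-)$ demanded by Theorem~\ref{theorem:gapbound} and respect the gap-stability ceiling $c < E_G$. It is precisely the collision of these conditions that pins the bound to $-E_G$, and the endpoint requires care: the gap is only guaranteed open for $c < E_G$ strictly, so the threshold value is attained only in the limit and the resulting inequality is non-strict. I would also check that the path $c \in [0,E_G)$ leaves the system in the same topological class---guaranteed because $c\tau_x$ is particle-hole odd and the gap never closes along the path---and that the auxiliary positivity of $\Delta' + c$ genuinely follows from Weyl's inequality rather than being assumed.
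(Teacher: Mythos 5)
Your proof is correct and follows essentially the same route as the paper: a rigid shift $H \to H + c\tau_x$, Weyl's inequality to keep the gap open for $c < E_G$, and reduction to Theorem~\ref{theorem:gapbound} (the paper simply fixes $c = \Delta_0 = \tfrac{1}{2}(E_G - \lambda_0)$ instead of taking $c \to E_G$). Your additional observation that $\mathcal{T}$ maps $\Delta' + H_M$ to $\Delta' - H_M$ and hence forces $\lambda_+ = \lambda_-$ is worth keeping, since the paper's final step $\lambda_\pm + \Delta_0 \geq \lambda_0 + \Delta_0$ with $\lambda_0 = \max(\lambda_+,\lambda_-)$ implicitly relies on exactly this fact.
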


\begin{proof}
Let $\lambda_\pm$ again be the smallest eigenvalues of $\Delta' \pm H_M$ respectively, and $\lambda_0 = \max(\lambda_+, \lambda_-) > - E_G$.
The case of $\lambda_0 > 0$ is covered by Theorem~\ref{theorem:gapbound}, so we consider $\lambda_0 \leq 0$.
Define $\Delta_0 = \frac{1}{2}(E_G - \lambda_0) > 0$, so that $E_G > \Delta_0 > -\lambda_0$. We define $H'(t) = H + t\Delta_0\tau_x$. If $E_i$ are the eigenvalues of $H$, and $E'_i(t)$ are the eigenvalues of $H'(t)$, by Weyl's inequality they satisfy
\begin{equation}
E_i - t \Delta_0 \leq E_i' \leq E_i + t\Delta_0.
\end{equation}
For this reason, for the gap $E'_G(t)$ of $H'(t)$ is
\begin{equation}
E_G'(t) \geq E_G - t\Delta_0 > 0.
\end{equation}
Hence, $H$ is topologically equivalent to
\begin{equation}
H'(t = 1) =
\begin{pmatrix}
H_0 + H_M & \Delta_1-i\Delta''\\
\Delta_1+i\Delta'' & - H_0 + H_M
\end{pmatrix},
\end{equation}
with $\Delta_1 = \Delta' + \Delta_0$. However, for this Hamiltonian, we have
$\Delta_1 \pm H_M = \Delta' \pm H_M + \Delta_0$, with smallest eigenvalues $\lambda_\pm + \Delta_0 \geq \lambda_0 + \Delta_0 > 0$. By our previous results, $H'(t=1)$ and hence $H$ is then trivial.
\end{proof}

Finally, to highlight the application of Theorem~\ref{theorem:weyl} to time-reversal invariant systems, we restate it as follows.
\begin{theorem}
	If $H$ has an energy gap $E_G$, and $H_M$ = 0, then $H$ is topologically trivial unless $\Delta'$ has eigenvalues smaller than $-E_G$.\label{theorem:tri}
\end{theorem}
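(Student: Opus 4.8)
The plan is to obtain Theorem~\ref{theorem:tri} as an immediate specialization of Theorem~\ref{theorem:weyl}. Since Theorem~\ref{theorem:weyl} is already established, the entire argument reduces to substituting the hypothesis $H_M = 0$ into its conclusion and simplifying the resulting eigenvalue condition.

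First I would recall that Theorem~\ref{theorem:weyl} states: for $H$ to be topological, both $\Delta' + H_M$ and $\Delta' - H_M$ must have an eigenvalue smaller than or equal to $-E_G$. Setting $H_M = 0$, the two matrices $\Delta' \pm H_M$ collapse to a single matrix $\Delta'$. The paired condition therefore degenerates into the single requirement that $\Delta'$ itself possess an eigenvalue at most $-E_G$. Contrapositively, if $\Delta'$ has \emph{no} eigenvalue smaller than $-E_G$---equivalently, if $\Delta' + E_G$ is positive semidefinite---then the necessary condition of Theorem~\ref{theorem:weyl} fails, and $H$ cannot be topological, i.e.\ it is trivial. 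This is precisely the statement to be proved.

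The only point requiring a moment of care is the boundary of the inequality (whether the threshold is a strict ``$<-E_G$'' or a non-strict ``$\leq -E_G$''), since Theorem~\ref{theorem:weyl} is phrased with ``smaller than or equal to'' while Theorem~\ref{theorem:tri} is phrased with ``smaller than''. I would reconcile this by noting that in the $H_M=0$ case the gap $E_G$ is literally the smallest magnitude of an eigenvalue of $H$, and tracking the strictness through the Weyl-inequality deformation in the proof of Theorem~\ref{theorem:weyl} shows that an eigenvalue of $\Delta'$ exactly equal to $-E_G$ still permits deforming to a gapped trivial Hamiltonian. Thus the triviality conclusion holds whenever all eigenvalues of $\Delta'$ exceed $-E_G$, which is the desired restatement.

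The main obstacle is not conceptual but rhetorical: because this theorem is genuinely a corollary, the challenge is to present it as a clean specialization rather than to reprove anything. I would keep the argument to a few lines, explicitly invoking Theorem~\ref{theorem:weyl}, observing that $\Delta' + H_M$ and $\Delta' - H_M$ coincide with $\Delta'$ when $H_M = 0$, and concluding by contraposition. No new deformation or determinant manipulation is needed beyond what the prior theorems supply.
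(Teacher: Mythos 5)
Your proposal matches the paper exactly: the paper gives no separate proof of Theorem~\ref{theorem:tri}, presenting it only as a restatement of Theorem~\ref{theorem:weyl} with $H_M=0$, which is precisely your specialization-by-contraposition. Your aside about the strict versus non-strict boundary at $-E_G$ is a fair observation (the deformation in the proof of Theorem~\ref{theorem:weyl} degenerates when the smallest eigenvalue equals $-E_G$ exactly, so the paper's own wording is loose there), but it does not change the substance of the argument.
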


In other words, the superconducting pairing of time-reversal invariant topological superconductors must have negative eigenvalues with magnitude exceeding the size of the gap.
This is an improvement of the result of Ref.~\cite{haim:2016:1}, in which the authors showed that a class DIII superconductor in one or two dimensions cannot be topological if $\Delta'$ is positive semidefinite.

\section{Application to $s$-wave systems}

We now apply our findings to the heterostructures with proximity-induced $s$-wave pairing, where time-reversal symmetry is broken by a Zeeman splitting.

Because $\Delta'$ and $H_M$ are now both local in space, we immediately obtain the required eigenvalues by comparing matrix elements.
Let the local magnitude of the $s$-wave pairing on site $i$ be $\Delta_i$ and the local magnitude of the Zeeman energy be $B_i$, so that the eigenvalues of $\Delta' \pm H_M$ are $\Delta_i \pm B_i$.
Then, applying Theorems~\ref{theorem:gapbound} and~\ref{theorem:weyl}, we find
\begin{enumerate}
	\item If $\forall i:\Delta_i - B_i > 0$, then the system is topologically trivial, and the energy gap is bounded from below by $\min_i(\Delta_i - B_i)$.
	
	\item If the system has an energy gap $E_G$, then the system must be topologically trivial if $\max_i(B_i - \Delta_i) < E_G$.
\end{enumerate}
The second finding also establishes an upper bound of $E_G \leq \max_i (B_i - \Delta_i)$ for the gap of such a system in the topological phase.

Let us consider how this translates to the effective model proposed in Ref.~\cite{vaitiekenas:2020:1}, displayed in Fig.~\ref{fig:schematic}(b), where a Zeeman splitting induced in the superconductor. It is seen that by Theorem~\ref{theorem:weyl}, any system of this type cannot enter a topologically nontrivial phase, as $\Delta \geq E_z$ everywhere. Indeed, any model where the Zeeman splitting induced in the semiconductor is small compared to other energy scales, even if nonvanishing, will not work; in order to induce a topological phase transition, it must have a magnitude larger than the system gap. In the actual experimental setup, as seen in Fig.~\ref{fig:schematic}(a), the \ce{EuS} region does have $E_z \gg \Delta$, and as such the theorems presented here do not necessarily prohibit nontrivial phases when the coupling between and \ce{InAs} cannot be neglected.
For a more in-depth analysis of how a ferromagnetic insulator can induce topological superconductivity we refer the reader to the recent works.
Specifically, Refs.~\cite{woods:2020:1,escribano:2020:1,liu:2020:1} offer an alternative explanation, where the proximity effect between the \ce{InAs} nanowire and the \ce{EuS} magnetic insulator is enhanced by the presence of the \ce{Al} superconductor, while Ref.~\cite{maiani:2020:1} proposes a spin-dependent tunnelling that essentially enhances the time-reversal symmetry-breaking term.

\section{Conclusion}
We established that time-reversal symmetry breaking must exceed the superconducting pairing by at least the spectral gap in order for a superconductor to be topological.
Our results offer a quick way to rule out the appearance of Majorana particles in hybrid devices in the parameter regimes where Zeeman field is sufficiently weak.

This bound is most relevant in $s$-wave systems with a pure Zeeman field. Where a Zeeman field is induced directly in the superconductor, but not in the semiconductor, such as in the model of Ref.~\cite{vaitiekenas:2020:1}, we find that topological superconductivity is prohibited. In systems containing a magnetic flux comparable with a single flux quantum, the norm of the time-reversal symmetry-breaking parts of the Hamiltonian becomes large, therefore making our bound easily exceeded. On the other hand, by instead assuming preserved time-reversal symmetry, our theorems also extend a previous result \cite{haim:2016:1}, that time-reversal invariant conventional superconductors in one or two dimensions are trivial, to arbitrary dimension.

Because our method does not assume anything about the orbital structure of the superconducting gap, it applies also beyond $s$-wave systems, e.g. in Refs.~\cite{nakosai:2012:1, wang:2014:1}, where several different pairings, some of which are positive definite, are considered. Thus our results may be useful in future research where similar types of unconventional superconductivity are relevant.

\paragraph{Author contributions}
A.R.~A.~formulated the initial project goal based on a random matrix theory simulation, and helped streamline the proofs. K.~P.~derived the analytical results with input from M.~W., and wrote the original draft of the paper. D.~V.~improved the clarity of the proofs. All authors discussed the results and contributed to writing the manuscript.

\paragraph{Funding information}
This work was supported by the NWO VIDI Grants 680-47-537 and 016.Vidi.189.180, a subsidy for top consortia for knowledge and innovation (TKI toeslag), by Microsoft Quantum, and by the Swedish Research Council (VR) as well as the Knut and Alice Wallenberg Foundation.

\begin{appendix}

\section{Applications beyond the Nambu basis}\label{app:symmetries}
In the main text we made the assumption that the Hamiltonian has the typical Nambu form with particle-hole symmetry $\mathcal P = i\tau_y \mathcal T$, with $\mathcal T$ not acting in particle-hole space. However, this assumption is stronger than necessary; only some of the properties this assumption yields are needed. Additionally, in many cases the Hamiltonian will be in a basis or symmetry class that necessitates a different form of the symmetry operators. Hence, in this Appendix we will discuss to what extent the theorems can be applied to these cases.

By examining the proofs, we see that it is not necessary for the Hamiltonian to be in the Nambu basis, nor for the symmetry operators to take any particular form. Indeed, our results can be applied to any Hamiltonian of the general form of Eq.~\eqref{eq:hamiltonian}, if it features either particle-hole or chiral symmetry, and if adding a unit matrix to $\Delta'$ without breaking the discrete symmetries preserving the topology of the Hamiltonian. We will outline the reasoning below.

First, we note that any Hermitian matrix of size $2N\times 2N$ can be written in the form of Eq.~\eqref{eq:hamiltonian}. The requirement for particle-hole or chiral symmetry stems from using the property that topological phase changes coincide with gap closings at zero energy.

The third assumption requires more elaboration. The reason for the requirement is that Theorems 2 and 3 rely on adding such a term. It could be argued that Theorem 1 does not, and that said step in the other two could be circumvented by adding another positive definite matrix instead. However, below we will show that any Altland-Zirnbauer symmetries that permit a positive definite matrices $\Delta'$ must also be compatible with $\Delta' = r I$ where $I$ is the identity matrix and $r \in \mathbb R$. The proof of Theorem~\ref{theorem:gapbound} also involves a step in which $H_0$ is deformed to unity and hence requires $\tau_z$ to be allowed by any symmetries, but this is not necessary in a general sense; deformation to another trivial Hamiltonian works equally well.

Finally, we note that in the Nambu basis assumed in the main text, it is easily seen that $\tau_x$ does not break the discrete symmetries.

\begin{theorem}
	Any discrete Altland-Zirnbauer-type symmetry that permits a nonvanishing positive or negative definite block matrix $\Delta'$ will also be compatible with one that is proportional to the identity.
\end{theorem}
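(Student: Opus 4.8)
The plan is to reduce the statement to a simple observation about how unitary congruence and complex conjugation act on definite matrices. First I would write a general discrete Altland--Zirnbauer symmetry as $S = U$ (unitary, for a chiral antisymmetry) or $S = UK$ (antiunitary, for time reversal or particle--hole), with $K$ complex conjugation and $S H S^{-1} = \eta H$, where $\eta = +1$ for a true symmetry and $\eta = -1$ for an antisymmetry. Decomposing $H = H_0\tau_z + H_M\tau_0 + \Delta'\tau_x + \Delta''\tau_y$ as in Eq.~\eqref{eq:hamiltonian}, I would track how $S$ acts on the $\tau_x$ sector and read off the resulting linear (or antilinear) condition obeyed by $\Delta'$.

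The key structural step is to show that the only way a symmetry can genuinely restrict $\Delta'$, rather than merely tie it to one of the other blocks, is through a self-condition of the form $U{\Delta'}^{(*)}U^\dagger = \sigma\Delta'$, where $(*)$ denotes complex conjugation for antiunitary $S$ and is absent otherwise, and $\sigma = \pm 1$ combines $\eta$ with the sign picked up when $S$ maps $\tau_x$ to $\pm\tau_x$. If instead $S$ sends $\tau_x$ to a different Pauli matrix, the induced constraint only relates $\Delta'$ to $\Delta''$, $H_0$, or $H_M$, leaving $\Delta'$ itself free; in that case any $\Delta'$, in particular $rI$, is admissible once the partner block is chosen accordingly, so there is nothing to prove.

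It then remains to treat the self-condition $U{\Delta'}^{(*)}U^\dagger = \sigma\Delta'$. Since unitary congruence $A \mapsto UAU^\dagger$ and entrywise conjugation $A \mapsto A^*$ both send positive (negative) definite Hermitian matrices to positive (negative) definite ones, a definite $\Delta'$ can satisfy the condition only when $\sigma = +1$; the choice $\sigma = -1$ would force a definite matrix to equal its own negative. Hence the hypothesis that the symmetry permits a nonvanishing positive or negative definite $\Delta'$ forces $\sigma = +1$ for every independent symmetry in the class. For $\sigma = +1$ the identity trivially complies, $U I U^\dagger = I$ in the unitary case and $U I^* U^\dagger = U I U^\dagger = I$ in the antiunitary case, and the same holds for $rI$ with $r \in \mathbb{R}$. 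Therefore $\Delta' = rI$ satisfies every symmetry condition obeyed by the definite matrix, which proves the claim.

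I expect the main obstacle to be the structural step: verifying cleanly that an arbitrary admissible Altland--Zirnbauer operator either leaves $\Delta'$ free or imposes exactly a self-condition $U{\Delta'}^{(*)}U^\dagger = \sigma\Delta'$, i.e.\ that no admissible combination of the at most two independent symmetries (time reversal and particle--hole, with chirality their product) can cut $\Delta'$ down to a proper subspace that contains a definite matrix but excludes the identity. Once the induced condition is pinned to this canonical form, the definiteness and conjugation-invariance argument is immediate.
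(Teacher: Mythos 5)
There is a genuine gap, and it sits exactly where you flagged it: the ``structural step.'' Your argument assumes that the unitary part of the symmetry operator acts on the Nambu structure by sending $\tau_x$ cleanly to $\pm\tau_\mu$ for a single Pauli matrix, so that the induced constraint on $\Delta'$ is either a pure self-condition $U{\Delta'}^{(*)}U^\dagger=\sigma\Delta'$ or a relation tying $\Delta'$ to a different block. This dichotomy is not exhaustive, because the theorem quantifies over \emph{any} Altland--Zirnbauer-type symmetry operator written in the fixed physical basis of Eq.~\eqref{eq:hamiltonian}: the unitary $U$ need not factorize as a tensor product over the $\tau$ space and the internal degrees of freedom, so in general it has both a component that commutes with $\tau$ and a component that anticommutes with it. The paper's proof confronts precisely this mixed case: it splits $U=a+b$ with $a\tau=-s\tau a$ and $b\tau=s\tau b$, and the resulting constraint is the \emph{sum} $b^\dagger\Delta^* b-a^\dagger\Delta^* a=\Delta$, which is neither of the two alternatives you allow. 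The nontrivial content of the theorem is then to show that the anticommuting piece $a$ sends an eigenvector of $\Delta$ with eigenvalue $\lambda$ to one with eigenvalue $-\lambda$, so that a definite $\Delta$ forces $a\Delta=0$; only after that does the constraint collapse to the self-condition you start from, at which point your definiteness-preservation argument (which is correct and is essentially how the paper concludes, using $bb^\dagger=1$ once $a=0$) applies.

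In short: your endgame (unitary congruence and complex conjugation preserve definiteness, so $\sigma=-1$ is excluded and $rI$ satisfies the $\sigma=+1$ condition) is sound and matches the spirit of the paper's conclusion, and your treatment of the chiral case is consistent with the paper's one-line dismissal. But the reduction to the canonical self-condition is the actual theorem, not a preliminary, and your proposal leaves it as an acknowledged obstacle rather than proving it. To close the gap you would need something equivalent to the paper's $a$--$b$ decomposition and the eigenvalue-flipping argument for the anticommuting component.
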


\begin{proof}
First, we note that this is trivially true for chiral symmetries: $U\Delta' = -\Delta' U$ necessitates that the eigenvalues of $\Delta'$ be symmetric around zero. Hence, we focus on antiunitary symmetries.

Let $\tau$ be a real Hermitian unitary matrix and $\Delta$ be a Hermitian matrix operating in another subspace, so that we can write $\tau \otimes \Delta$ as $\Delta\tau$. Assume there exists a unitary matrix $U$ so that
\begin{equation}
U^\dagger \Delta^*\tau U = s\Delta \tau,\label{app:eq:generalsymmetry}
\end{equation}
where $s = \pm 1$. Clearly $s = -1$ corresponds to particle-hole symmetry, and $s = +1$ corresponds to time reversal. Define the matrices $a = U - s\tau U \tau$ and $b = U + s\tau U \tau$, so that $U = a + b$, and note that $a\tau = -s\tau a$ and $b\tau = s\tau b$.

We can obtain the following relations for $a$ and $b$:

\begin{equation}
ab^\dagger = -ba^\dagger \qquad aa^\dagger + bb^\dagger = 1.
\end{equation}

Further, by substitution of $U = a + b$ into Eq.~\eqref{app:eq:generalsymmetry}, we see that
\begin{align}
b^\dagger \Delta^* b - a^\dagger \Delta^* a  &= \Delta\label{app:eq:abdelta}\\
a^\dagger \Delta^* b &=  b^\dagger \Delta^* a.
\end{align}
Because $\Delta$ is a Hermitian matrix, its eigenvectors satisfy the equation
\begin{equation}
\Delta \ket{\lambda} = \lambda \ket{\lambda}
\end{equation}
for real eigenvalues $\lambda$. Let us assume that $a \neq 0$, i.e, the particle-hole (time reversal) unitary matrix has a component that commutes (anticommutes) with $\tau$. Applying Eq.~\eqref{app:eq:abdelta}, multiplying by $a$ from the left, and then making use of the other relations described above, we find that we must have
\begin{equation}
\Delta \left[a \ket{\lambda}\right]^* = -\lambda \left[a \ket{\lambda}\right]^*.
\end{equation}
This can only be true $\forall \ket{\lambda}$ if either $a\Delta = 0$ or if $\Delta$ has eigenvalues symmetric around zero. Hence, $\Delta$ cannot be positive (or negative) definite unless $a = 0$. By substituting $\tau = \tau_x$ or $\tau = \tau_z$, we see that if the symmetries prohibit identity matrices on the block (off-) diagonal, all other positive or negative definite matrices are likewise forbidden.
\end{proof}

\section{On the gauge dependence of $\Delta''$}\label{app:gauge}

As noted in the main text, the distinction between $\Delta'$ and $\Delta''$ is essentially one of gauge, and hence the theorems, relying on $M_\pm \equiv \Delta' \pm H_M$, are gauge dependent as stated.  However, it is easy to see if that a unitary transformation $U = \cos \frac{\theta}{2} \tau_x + \sin\frac{\theta}{2}\tau_y$ provides a new $\Delta'_\theta = \Delta'\cos \theta + \Delta''\sin\theta$. for which we can repeat the steps of each proof using $M_\pm(\theta) = \Delta'_\theta \pm H_M$. In order for the system to be topological, it must meet the criteria for each $\theta$ separately; as such, for a final, gauge independent formulation we can consider $\max_\theta \left[\max(\lambda_+(\theta),\lambda_-(\theta))\right]$ with $\lambda_\pm(\theta)$ being the smallest eigenvalue of $M_\pm(\theta)$. The proofs in the main text are still valid; this formulation can at most extend the region proved to be trivial. As usual, care must be taken that the range of $\theta$ considered only covers terms allowed by the discrete symmetries of the system; for the standard symmetries  $\mathcal T = i\sigma_y K$ and $\mathcal P = \tau_y\sigma_y K$, this does not pose additional restrictions.

\end{appendix}

\bibliography{NoGo_bibliography.bib}

\nolinenumbers

\end{document}